\documentclass[12pt, letterpaper]{article}

\usepackage[english]{babel}
\usepackage[utf8x]{inputenc}
\usepackage{authblk}
\usepackage{amsmath}
\usepackage{amssymb}
\usepackage{amsthm}
\usepackage{theoremref}
\usepackage{xcolor}
\usepackage{eucal}
\usepackage{flushend}

\newtheorem{definition}{Definition}

\newtheorem{lemma}{Lemma}
\newtheorem{theorem}{Theorem}
\newtheorem{remark}{Remark}

\newcommand{\Fqm}{\mathbb{F}_{q^m}}

\newcommand{\Fq}{\mathbb{F}_{q}}

\newcommand{\bs}{\mathbf{s}}
\newcommand{\be}{\mathbf{e}}
\newcommand{\bx}{\mathbf{x}}
\newcommand{\by}{\mathbf{y}}

\newcommand{\bh}{\mathbf{h}}
\newcommand{\bu}{\mathbf{u}}
\newcommand{\bv}{\mathbf{v}}

\newcommand{\calC}{\mathcal{C}}
\newcommand{\calE}{\mathcal{E}}
\newcommand{\calH}{\mathcal{H}}
\newcommand{\calA}{\mathcal{A}}
\newcommand{\calB}{\mathcal{B}}
\newcommand{\calV}{\mathcal{V}}
\newcommand{\calU}{\mathcal{U}}
\newcommand{\calT}{\mathcal{T}}

\begin{document}

\title{Low Row Rank Parity Check Codes}

\author[1]{Ermes Franch}
\author[1]{Chunlei Li}
\affil[1]{Department of Informatics, University of Bergen, Norway}
\affil[]{email: \textit{\{ermes.franch,chunlei.li\}@uib.no}}

\date{January 2021}

\maketitle

\begin{abstract}
    In this paper we present an extended variant of low rank parity check matrix (LRPC) codes that have received significant interests in recent years. It is shown that the extension indeed yields a super-family of LRPC codes, which are termed low row rank parity check codes. The decoding method of the proposed codes is also investigated.
\end{abstract}

\section{Introduction}\label{sec:Introduction}
	Rank metric codes were independently introduced by Delsarte \cite{Delsarte:1978aa}, Gabidulin \cite{Gabidulin1985} and Roth \cite{roth1991maximum}.
	The Gabidulin codes, which are an equivalent of the Reed-Solomon codes for the rank metric, have found applications in network coding \cite{SilvaKschischangKoetter}, space-time coding \cite{Gabidulin-SpaceTimeCodes} and cryptography \cite{GPT-1991}. In 2013 Gaborit, Murat and Z\'{e}mor introduced a new family of rank metric codes, termed \textit{low rank parity check} (LRPC) codes, which can be seen as the equivalent of LDPC
	codes \cite{gallager1962}.
    With an efficient probabilistic decoding algorithm,
	LRPC codes have been applied in several cryptosystems \cite{IBE-rank2017,LRPC-2019-TIT, ROLLO} as well as a signature scheme \cite{RankSign2014}, which are designed
	for better resistance against quantum computers and smaller key size compared to the classic McEliece cryptosystem in Hamming metric \cite{McEleice1978}. Among these schemes, the cryptosystem ROLLO \cite{ROLLO}, based on a subfamily of LRPC codes, has reached the second round
    of the National
Institute of Standards and Technology (NIST) post-quantum competition.
	
\smallskip	
	
	In this paper we will present an extended variant of the LRPC codes, which are called \textit{low row rank parity check codes}, by applying the constraint of low rank weight only 
	on each row of the parity check matrix instead of on the whole parity check matrix as for LRPC codes. The proposed low rank parity check codes are shown to contain LRPC codes as a strictly proper subclass. Due to the loosened constraint, 
    the decoding technique of LRPC codes no longer works for the low row rank parity check codes.
    We investigate the decoding process of these new codes
	and discuss potential reasons for decoding failures.

\section{Background on rank metric codes}
\subsection{Notations}
Let $\Fq$ the finite field of $q$ elements, where $q$ is a power of a prime $p$. 
The extension field $\Fqm$ can be deemed as an $\Fq$-vector subspace of dimension $m$.
Given elements $v_1, \ldots, v_r$ in $\Fqm$,  
the $\Fq$-vector subspace generated by $v_i$'s, denoted in the calligraphic upper case, is given by
\[
\calV = \langle v_1, \ldots, v_r \rangle = \left\{\sum_{i=1}^r \lambda_i v_i \, | \, \lambda_i \in \Fq \right\}.
\]
We will denote vectors over  $\Fqm$ in bold lower case as $\mathbf{v}=(v_1, \ldots, v_n)$ and denote matrices in upper case as $M=(M_{i,j})$.
Given a $k \times n$ matrix $M$ and two non-empty ordered subsets $I \subseteq \{1,\ldots,k\}, J \subseteq \{1,\ldots,n\}$, the sub-matrix of $M$ whose rows are indexed by $I$ and columns are indexed by $J$ is denoted as $M_{I,J}$. The transpose of the matrix $M$ will be denoted as $M^\intercal$.

\smallskip

Let $\calV = \langle v_1, \ldots, v_d\rangle$ be an $\Fq$-vector space of dimension $d$. We denote by $\calV^*$ the set $\calV \setminus \{0\}$ and by $\calV^{-1}$ the set $ \{v^{-1}| v \in \calV^*\}$. Given another $\Fq$-vector space $\calU = \langle u_1, \ldots, u_d\rangle$ of dimension $r$, the product vector space of $\calV$ and $\calU$ is defined as
\(
\calV.\calU = \langle v_1u_1, \ldots, v_1u_d, \ldots, v_ru_1, \ldots, v_ru_d \rangle,
\) which has dimension at most $rd$.

\subsection{Basic definitions}
We start with the definitions of rank weight and rank distance of elements in $\Fqm$.
\begin{definition}
Let $S \subseteq \Fqm$ and let $\langle S \rangle$ be the $\Fq$-vector subspace generated by all the elements of $S$. The  \textbf{rank weight} of $S$ is given by  $\mathrm{rw}(S) = \dim(\langle S \rangle)$.
\end{definition}
We can extend this definition in a natural way to vectors and matrices.
\begin{definition}
Let $\bv = (v_1, \ldots, v_n)$ be a vector  over $\Fqm$. 
The $\Fq$-vector subspace $\calV = \langle v_1, \ldots, v_n \rangle$ is called  the \textbf{support} of $\bv$.
The \textbf{rank weight} of $\bv$ is given by 
\[ 
\mathrm{rw}(\bv) := \dim\left(\calV \right) = \dim(\langle v_1, \ldots, v_n \rangle).
\]
The \textbf{rank distance} between two vectors $\bu,\bv$ is given by
\[
\mathrm{d}(\bu,\bv) := \mathrm{rw}(\bu-\bv).
\]
\end{definition}
\begin{definition}\label{def:LRPC}
The \textbf{weight} $\delta$ of a matrix  $H$ over $\Fqm$ is defined as the rank weight of the set of all its elements.
The matrix $H$ is said to have \textbf{low weight} if $\delta < m$, particularly $\delta\ll m$.
A rank metric code $\calC$ over $\Fqm$ is said to be an \textbf{LRPC code} of weight $\delta$ if $\calC$ admits a parity check matrix $H$ of weight less than or equal to $\delta$.
\end{definition}
Given an LRPC code $\calC$ with a low-weight parity check matrix $H$, it is easy to build another  parity check matrix $H'$ for $\calC$ of full weight $m$. On the contrary, starting from a matrix $H'$ of weight $m$, it is hard to find an equivalent matrix $H$ of low weight. It would be at least as hard as finding $n-k$ linearly independent vectors of low weight in the dual code of $\calC$ (which are not guaranteed to exist if $\calC$ is not an LRPC code).


\section{The row-LRPC codes}
In this section we will introduce a super-family of LRPC codes by 
applying the low-weight constraint only to rows of a parity check matrix instead of the whole matrix. The decoding process of those new codes will be examined in the next section.
\begin{definition}\thlabel{def:row-LRPC}
The \textbf{row weight} $\rho$ of an $(n-k)\times n$ matrix $H$ over $\Fqm$ is defined as the maximum rank weight of all its rows:
\[
\rho = \max\limits_{1\leq i \leq n-k}(\mathrm{rw}(\bh_i)), 
\] where $\bh_i$ for $1\leq i \leq n-k$ is the $i$-th row of $H$.
The matrix $H$ is said to  have \textbf{low row weight} if $\rho<m$, particularly $\rho \ll m$.
We call a rank metric code $\calC$ a low row rank parity check (shortly row-LRPC)  code of row weight $\rho$ if $\calC$ admits a parity check matrix $H$ of \textbf{row weight} less than or equal to $\rho$.
\end{definition}
In the above definition we use the abbreviation row-LRPC codes, instead of LRRPC codes, for better distinguishing them from LRPC codes.
For simplicity of presentation, we denote by $\mathrm{LRPC}_q(n,m,k,\delta)$ the set of all the $[n, k]$ linear LRPC codes over $\Fqm$ of weight $\delta$, and by row-$\mathrm{LRPC}_q(n,m,k,\rho)$ the set of all the $[n, k]$ row-LRPC codes over $\Fqm$ of row weight $\rho$.

\begin{remark}
From Definition \ref{def:LRPC} and Definition \ref{def:row-LRPC}, we have $\mathrm{LRPC}_q(n,m,k,\delta) \subseteq \mathrm{LRPC}_q(n,m,k,\delta')$ if $\delta \le \delta'$, similarly we have that $\text{row-}\mathrm{LRPC}_q(n,m,k,\rho) \subseteq \text{row-}\mathrm{LRPC}_q(n,m,k,\rho')$ if $\rho \le \rho'$.
\end{remark}

Let $H$ be an $(n-k)\times n$ matrix with row weight $\rho$.  The weight $\delta$ of $H$ satisfies $\rho \leq \delta \leq (n-k)\rho$. Hence, a row-LRPC code of row weight $\rho$ may admit a parity check matrix with weight ranging from $\rho$ up to $m$ when $(n-k)\rho \geq m$.
In other words, it is clear that $\mathrm{LRPC}_q(n,m,k,\rho) \subseteq \text{row-}\mathrm{LRPC}_q(n,m,k,\rho).$
The opposite inclusion is not true. As shown below, under certain condition it is always possible to construct a row-LRPC code of row weight $\rho \geq 2$ that has all its parity check matrices of weight strictly greater than $\rho$.
\begin{theorem}
Let $2 \leq \rho < m$ and $2\rho  -2 \le k \le n-2$. Then we have $$\mathrm{LRPC}_q(n,m,k,\rho) \subsetneq \text{row-} \mathrm{LRPC}_q(n,m,k,\rho)$$
\end{theorem}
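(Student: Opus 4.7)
Since $\mathrm{LRPC}_q(n,m,k,\rho)\subseteq\text{row-}\mathrm{LRPC}_q(n,m,k,\rho)$ is immediate from the definitions, establishing strictness reduces to exhibiting a code in the latter set but not in the former. The plan is to first build a counterexample when $n-k=2$ (i.e.\ $k=n-2$) and then lift it to arbitrary $n-k\ge 2$ by padding the parity check matrix with trivial rows that fix some coordinates to zero.

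For the core construction, pick $\Fq$-subspaces $\calV_1,\calV_2\subseteq\Fqm$ of $\Fq$-dimension $\rho$ lying in distinct orbits of the scaling action $\calV\mapsto\mu\calV$ of $\Fqm^*$ on $\rho$-dimensional subspaces (generically available once $\rho<m$). The hypothesis $k\ge 2\rho-2$ combined with $n=k+2$ yields $n\ge 2\rho$, so the $\rho$ basis elements of $\calV_1$ can be placed in columns $1,\dots,\rho$ of a $2\times n$ matrix $H$ and those of $\calV_2$ in columns $\rho+1,\dots,2\rho$, giving a full-rank parity check matrix of row weight $\rho$ whose two rows have disjoint column supports. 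To see that $\calC:=\ker H$ is not LRPC of weight $\rho$, suppose for contradiction that a parity check matrix $H'=UH$ with $U\in\mathrm{GL}_2(\Fqm)$ has all its entries in some $\Fq$-subspace $\calT$ with $\dim_{\Fq}\calT\le\rho$. The disjoint-column layout forces the entries of row $i$ of $H'$ in columns $1,\dots,\rho$ to $\Fq$-span $U_{i1}\calV_1$ and those in columns $\rho+1,\dots,2\rho$ to span $U_{i2}\calV_2$; hence whenever $U_{i\ell}\ne 0$ the $\rho$-dimensional subspace $U_{i\ell}\calV_\ell$ must sit inside the $\rho$-dimensional $\calT$ and so equal $\calT$. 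If both $U_{i1},U_{i2}\ne 0$, this yields $(U_{i2}/U_{i1})\calV_2=\calV_1$, contradicting the distinct-orbit choice. Each row of $U$ therefore has a single nonzero entry, and invertibility upgrades $U$ to a monomial matrix $U=DP$; applied to the two rows, the relation $d_i\calV_{\pi(i)}=\calT$ again forces $\calV_1$ and $\calV_2$ to lie in a common orbit, a contradiction.

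For general $n-k\ge 2$, inflate the construction by prefixing $n-k-2$ unit-weight rows (the standard basis vectors $\mathbf{e}_1,\dots,\mathbf{e}_{n-k-2}$) to $H$, constraining the first $n-k-2$ coordinates of every codeword to zero; the resulting $[n,k]$ code still has a parity check matrix of row weight $\rho$. Any weight-$\le\rho$ parity check matrix of this inflated code would, after projection to the last $k+2$ coordinates, provide a weight-$\le\rho$ spanning set for the dual of the smaller $[k+2,k]$ code (which is the projection of the inflated dual), contradicting the previous step. The main obstacle I anticipate is proving the existence of two $\rho$-dimensional $\Fq$-subspaces of $\Fqm$ in distinct $\Fqm^*$-orbits: this is comfortable when $m$ is noticeably larger than $\rho$, but in the corner case $m=\rho+1$ with $\gcd(m,\rho)=1$ the scaling action on $\rho$-dimensional subspaces is transitive so only one orbit exists, forcing a more delicate combinatorial construction for that boundary of the parameter range.
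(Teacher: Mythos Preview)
Your approach is essentially the paper's: both pick two $\rho$-dimensional $\Fq$-subspaces $\calA,\calB\subseteq\Fqm$ not related by any scalar $\lambda\in\Fqm^*$, place their bases in disjoint column blocks of the first two rows of $H$, pad the remaining $n-k-2$ rows with weight-one rows, and then argue that every invertible left multiple $TH$ has total weight exceeding $\rho$ by isolating a $2\times 2$ invertible block coming from the first two columns of $T$. Your ``reduce to $n-k=2$ then lift by prefixing unit rows'' is just a reorganisation of the same construction; the paper builds the padded $H$ in one shot and restricts attention to the first $2\rho$ columns, which amounts to your projection step.

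The obstacle you flag at the end is real, and it is in fact an unacknowledged gap in the paper's own proof. The paper simply \emph{assumes} the existence of two $\rho$-dimensional subspaces $\calA,\calB$ with $\calA\neq\lambda\calB$ for every $\lambda\in\Fqm$, without justification. Your observation that this fails at $m=\rho+1$ is correct: hyperplanes in $\Fqm$ are parametrised (up to $\Fq^*$) by $\Fqm^*$ via $\alpha\mapsto\ker\mathrm{Tr}(\alpha\,\cdot)$, and $\lambda\cdot\ker\mathrm{Tr}(\alpha\,\cdot)=\ker\mathrm{Tr}(\alpha\lambda^{-1}\,\cdot)$, so $\Fqm^*$ acts transitively on $\rho$-dimensional subspaces and no such pair exists. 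Thus your proposal matches the paper where the paper is correct, and your caveat identifies a boundary case the paper overlooks; neither argument, as stated, covers $m=\rho+1$.
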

\begin{proof}
Suppose $\calA = \langle a_1, \ldots,a_\rho \rangle, \calB = \langle b_1, \ldots, b_\rho \rangle$ are two $\rho$-dimensional $\Fq$-vector subspace of $\Fqm$ such that there exists no $\lambda \in \Fqm$ satisfying $\calA = \lambda \calB $.
We build an $(n-k)\times n$ parity check matrix $H$ of row density $\rho$ in the following way
\[
H:=
\begin{pmatrix}
a_1 & \cdots & a_\rho & 0     & \cdots & 0         & \mathbf{0}_{n-2\rho}  \\
0   & \cdots & 0        & b_1   & \cdots & b_\rho  & \mathbf{0}_{n-2\rho} \\
0   & \cdots & 0        & 0     & \cdots & 0         & J  \\
\end{pmatrix},
\]
where $J$ is an ${(n-k-2)\times (n-2\rho)}$ matrix obtained by extending the identity matrix of order $n-k-2$ with $k+2-2\rho$ zeros at each row, and $\mathbf{0}_{n-2\rho}$ is a  $1 \times (n-2 \rho)$ matrix of zeros. 

\smallskip

Let $\calC$ be a row-LRPC code admitting $H$ as its parity check matrix.
We can write all the possible equivalent parity check matrices for $\calC$ as $H' = T H$ for some invertible $(n-k)\times(n-k)$ matrix $T$ over $\Fqm$.
Since $T$ is a square invertible matrix, all its columns have to be linearly independent.
Consider the first two columns of $T$, without loss of generality we can assume that the submatrix
\[T_{\{1,2\}, \{1,2\}} =
\begin{pmatrix}
t_{1,1} & t_{1,2}\\
t_{2,1} & t_{2,2}\\
\end{pmatrix}
\]
has non null determinant. The first $2$ rows of $TH$ can be given by
\[
\begin{pmatrix}
t_{1,1} a_1   & \cdots  &  t_{1,1} a_\rho     & t_{1,2} b_1  & \cdots   & t_{1,2} b_\rho & \bx(T)    \\
t_{2,1} a_1 & \cdots & t_{2,1} a_\rho & t_{2,2} b_1 & \cdots &t_{2,2} b_\rho & \by(T)
\end{pmatrix},
\]
where $\bx(T)$ and $\by(T)$ are two vectors of length $(n-2\rho)$ depending on $T$ (which are both $0$-vectors in the worst case).

Let $\calT$ be the $\Fq$-vector subspace generated by the elements of the submatrix
$TH_{\{1,2\}, \{1, \ldots, 2\rho\}}$ we can describe it as
\[
\calT = t_{1,1} \calA + t_{2,1} \calA + t_{1,2} \calB + t_{2,2} \calB.
\]
Since $T_{\{1,2\}, \{1,2\}}$ is invertible, it follows that either $t_{1,1} t_{2,2} \ne 0$ or $t_{1,2} t_{2,1} \ne 0$.
We want to prove that $\dim(\calT) > \rho$. Without loss of generality we can assume $t_{1,1}, t_{2,2} \ne 0$, since $t_{1,1} \calA + t_{2,2} \calB \subseteq \calT$ then
\[
\dim(t_{1,1}\calA + t_{2,2} \calB ) \le \dim(\calT).
\]
Observe that 
\begin{align*}
 &\dim(t_{1,1} \calA + t_{2,2} \calB ) \\
=  & \dim(t_{1,1} \calA) + \dim(t_{2,2} \calB )  - \dim(t_{1,1} \calA \cap t_{2,2} \calB ) \\
 = & 2\rho - \dim(t_{1,1} \calA \cap t_{2,2} \calB ). 
\end{align*}
This implies $\dim(\calT) \ge 2\rho - \dim(t_{1,1} \calA \cap t_{2,2} \calB )$.
Since $\dim(t_{1,1} \calA \cap t_{2,2} \calB ) \le \rho$, then
$\dim(\calT) \ge \rho$, where the equality holds only if
$\dim(t_{1,1} \calA \cap t_{2,2} \calB) = \rho$. This is actually equivalent to 
\[
\calA = t_{1,1}^{-1} t_{2,2}\calB,
\]
which is a contradiction since by our choice of $\calA$ and $\calB$ there exists no $\lambda \in \Fqm$ satisfying $\calA = \lambda \calB$.
Let $\mathrm{rw}(H)$ denote the rank weight of the matrix $H$. For any 
invertible matrix $T$, we have
    \begin{align*}
        \mathrm{rw}(TH) & \ge \mathrm{rw}(TH_{\{1,2\}, \{1,\ldots, n\}}) 
        \\
        &\ge \mathrm{rw}(TH_{\{1,2\}, \{1,\ldots, 2\rho\}}) > \rho.
    \end{align*}
The desired conclusion follows.
\end{proof}

\begin{remark}
Theorem 1 is does not cover the whole range of all parameters that make Theorem 1's result hold. Indeed it is quite easy to find such examples for  parameters. 
\end{remark}

Theorem 1 indicates that the extension of LRPC codes to row-LRPC codes is a proper extension for $\rho\ge 2$. 
Nevertheless, this is not true for the case when the row weight equals $1$.

\begin{lemma}\thlabel{weight1}
$ \small{\text{row-} \mathrm{LRPC}_q(n,m,k,1) =\mathrm{LRPC}_q(n,m,k,1).}$
\end{lemma}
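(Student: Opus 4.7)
The reverse inclusion $\mathrm{LRPC}_q(n,m,k,1) \subseteq \text{row-}\mathrm{LRPC}_q(n,m,k,1)$ already follows from the discussion preceding the lemma (a matrix of weight $\delta$ has each row of rank weight at most $\delta$). So the plan is to prove the forward inclusion $\text{row-}\mathrm{LRPC}_q(n,m,k,1) \subseteq \mathrm{LRPC}_q(n,m,k,1)$.

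The key observation is the following structural fact about rank-$1$ rows: if $\bh_i = (h_{i,1}, \ldots, h_{i,n})\in \Fqm^n$ has rank weight at most $1$, then all nonzero entries of $\bh_i$ lie in a one-dimensional $\Fq$-subspace $\langle \lambda_i\rangle$ of $\Fqm$, so we can write $\bh_i = \lambda_i \bv_i$ for some $\lambda_i \in \Fqm^*$ and $\bv_i \in \Fq^n$ (for rows that happen to be zero, set $\lambda_i = 1$ and $\bv_i = \mathbf{0}$).

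Starting from a row-LRPC code $\calC$ with parity check matrix $H$ of row weight $1$, I would apply this to every row, and then left-multiply $H$ by the diagonal matrix $D = \mathrm{diag}(\lambda_1^{-1}, \ldots, \lambda_{n-k}^{-1})$. Since $D$ is invertible over $\Fqm$, the product $H' = DH$ is an equivalent parity check matrix of $\calC$, and by construction every row of $H'$ lies in $\Fq^n$. Consequently the $\Fq$-span of all entries of $H'$ is contained in $\Fq$, so $\mathrm{rw}(H') \leq 1$, which by Definition \ref{def:LRPC} places $\calC$ in $\mathrm{LRPC}_q(n,m,k,1)$.

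There is no real obstacle here; the statement essentially amounts to the fact that all rank-$1$ rows share the same universal shape "scalar times an $\Fq$-vector", and independent per-row rescalings can simultaneously normalize them into $\Fq$. The only minor care needed is to handle possibly zero rows (if one does not assume $H$ has full row rank), which is done by assigning them trivial scaling factors.
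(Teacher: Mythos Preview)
Your proposal is correct and follows essentially the same argument as the paper: factor each rank-$1$ row as an $\Fqm$-scalar times an $\Fq$-vector, then left-multiply by the diagonal matrix of inverse scalars to obtain an equivalent parity check matrix with all entries in $\Fq$. Your treatment is slightly more careful in that you explicitly handle possibly zero rows, which the paper's proof glosses over.
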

\begin{proof}
Let $\mathcal{C}$ be a row-LRPC code of dimension $k$ and row weight $1$. The code $\calC$ admits an $(n-k)\times n$ parity check matrix $H$ whose rows have all rank weight $1$. This means $h_{i,j} \in \mathcal{H}_i = \langle h_i \rangle$ for some $h_i \in \Fqm^*$, i.e., $h_{i,j} = \lambda_{i,j} h_i$ for some $\lambda_{i,j} \in \Fq$.

The weight of the matrix $H$ is $\dim(\langle h_1, \ldots, h_{n-k}\rangle)$ which can be any number between $1$ and $\min(n-k, m)$. We want to prove that there exists as well another parity check matrix $H'$ of weight $1$ for the same code $\calC$. In fact, consider the diagonal matrix $L$ with the entries $(i,i)$ being $h_i^{-1}$ and all other entries zero.
Then the matrix $H'= LH$ is still a parity check matrix of $\calC$ and its entries can be written as $h_{i,j}' = h_i^{-1} h_{i,j} = h_i^{-1} \lambda_{i,j} h_i = \lambda_{i,j} \in \Fq$. Therefore the weight of $H'$ is $1$.
\end{proof}

\section{Decoding of Row-LRPC Codes}
The syndrome decoding problem for Hamming distance is a fundamental problem in complexity theory, which was
	proved to be an NP-complete problem Berlekamp, McEliece and van Tilborg \cite{RSD-1978-TIT}. When the code is embedded with rank metric, the corresponding rank syndrome decoding (RSD) problem is given as below.
	
	\begin{definition} \cite{GPT-1991,GaboritRuattaSchrek2016}
		Given an $(n-k)\times n$ matrix over $\Fqm$, a vector $\bs$ in $\Fqm^{n-k}$ and an integer $r$,
		find a vector $\be \in \Fqm^n$ of rank weight $\mathrm{rw}(\be) = r$ satisfying
		\(
		\be H^\intercal = \bs.
		\)
	\end{definition}
	
	By  embedding linear codes in the Hamming space into
	linear codes over an extension field equipped with the rank
	metric, Gaborit and Z\'{e}mor \cite{GaboritZemor2016} proved the hardness of
	the  RSD problem under an unfaithful random
	reduction, which indicates that if the RSD problem is in ZPP, then ZPP = NP.
 	 The RSD problem has been intensively studied in recent years driven by its application in post-quantum cryptosystems \cite{Gaborit2014-RankCrypto, Loidreau2017, ROLLO}.
	Early attacks on RSD appeared in \cite{ChabaudStern1996,OurivskiJohansson2002}.
	Some progress was lately made in \cite{GaboritRuattaSchrek2016}, where the authors proposed two new generic attacks: \textit{support attack} and \textit{polynomial annulator attack} for the RSD problem,
	which outperform the combinatorial attack in \cite{OurivskiJohansson2002} when the error vector has a relatively small rank weight $r$.
	Very recently, algebraic attacks with Grobner basis technique \cite{AlgAttack2020} and without Grobner basis technique \cite{AlgAttack2020-b} were proposed, which have significantly advanced the RSD problem.
	
	With the hardness of RSD problem, by far there are two general families of rank metric codes for which 
	(probabilistic) polynomial-time decoding algorithms have been found. Known decoding algorithms for the Gabidulin codes (e.g., \cite{Gabidulin1985})
	rely on the strong algebraic structure of Gabidulin codes, and the probabilistic decoding algorithm for the LRPC codes \cite{gaborit2013} utilizes 
	the low-weight property of the codes. More specifically,
	the decoding algorithm for LRPC codes can be divided into two steps. The first step is to recover the support $\calE$ of the error vector $\be$ by intersecting the vector spaces derived from the parity check matrix $H$ and the syndrome vector $\mathbf{s}$; the second step is to reconstruct each component of the error vector in the support $\calE$ by re-utilizing the syndrome equation $\mathbf{s}=\mathbf{e}H^\intercal$. 
	
	\smallskip
	
	When a row-LRPC code of row weight $\rho \geq 2$ is considered, its parity check matrix $H$ may have weight $\delta$ ranging from  $\rho$ to $\min(m, (n-k)\rho)$. Even in the non generic case when there exists a parity check matrix $H'$ for the same code of weight sufficiently small, there is no efficient algorithm to compute $H'$ from a randomly given $H$. Therefore, in general, we cannot use the decoding algorithm of LRPC codes for decoding row-LRPC codes.
	
	\smallskip
	
	In the following we shall explore the decoding process of row-LRPC codes, particularly when the row weight is sufficiently small.
	Similar to the two families of efficiently decodable rank metric codes Gabidulin codes \cite{Gabidulin1985} and LRPC codes \cite{gaborit2013,LRPC-2019-TIT}, the critical step of decoding row-LRPC codes is to efficiently recover the error support.
	Once the the error support is recovered, it is possible to expand the $n-k$ linear equation given by the syndrome  obtaining $m(n-k)$ equation over $\Fq$ and solve the linear system.

\subsection{The simplest case $r=1$}

The easiest case to treat is when the error support $\calE$ has dimension $1$. In this case, we can write $\calE = \langle \varepsilon \rangle$ for some $\varepsilon \in \Fqm$. The error $\be$ can be written as $(\lambda_1 \varepsilon, \ldots, \lambda_n \varepsilon)$ for some $\lambda_j \in \Fq$. The general syndrome equation is then given by
\begin{equation}\label{syndrome:error1}
s_i = \sum_{j=1}^n h_{i,j} \lambda_j \varepsilon = \varepsilon  \sum_{j=1}^n h_{i,j} \lambda_j = \varepsilon h_i, \qquad h_i \in \calH_i,
\end{equation}
where $\calH_i$ is the $\rho$-dimensional $\Fq$-vector subspace generated by the elements of the $i$-th row of $H$.
From equation \eqref{syndrome:error1}, if $s_i \ne 0$ then $h_i \ne 0$, so we have $\varepsilon \in s_i \calH_i^{-1}$. That is equivalent to  $\varepsilon^{-1} \in s_i^{-1} \calH$. This second formulation has an advantage that $s_i^{-1} \calH$ is an $\Fq$-vector subspace since
$$\lambda_1s_i^{-1}h_1+\lambda_2s_i^{-1}h_2 = s_i^{-1}(\lambda_1h_1+\lambda_2h_2)$$ belongs to $s_i^{-1} \calH$ for any $\lambda_1, \lambda_2 \in \Fq$, $h_1, h_2\in \calH$. By intersecting several sets of the form $s_i^{-1} \calH_i$, we will likely obtain the $\Fq$-vector space $\Fq \varepsilon^{-1}$ from which it's easy to find the support $\calE$.

\subsection{Cramer's rule decoding for $r=2$}
We will present a method to recover the error support when the error has rank $r \geq 2$. Before treating the general case, we shall start with the case $r=2$ as a simpler illustration in this subsection.

Notice that, as previously shown, the set $s \calH = \{sh| h \in \calH\}$ is a $\Fq$-vector subspace
if $\calH $ is a $\Fq$-vector subspace. Moreover, it is clear that $s \calH =  \langle s h_1, \ldots, s h_\rho \rangle $ when $\calH =  \langle h_1, \ldots,  h_\rho \rangle$ since any element in $s \calH$ can be linearly expressed by $sh_1, \ldots, sh_\rho$. This fact will be frequently used in the subsequent decoding process.

\smallskip

Let $\calE = \langle \varepsilon_1, \varepsilon_2 \rangle$ be the error support of $\be$ and let $H$ be a parity check matrix of row weight $\rho$. The general syndrome equation will be of the form
\[
h_{i,1} \varepsilon_1 + h_{i,2} \varepsilon_2 = s_i, \qquad h_{i,1}, h_{i,2} \in \calH_i 
\]
where $\calH_i$ is the $\rho$-dimensional $\Fq$-vector subspace generated by the elements of the $i$-th row of the parity check matrix $H$.
When $s_i \ne 0$ we can multiply the whole equation by $s_i ^{-1}$. Let $\calA_i$ be  an $\Fq$-vector space given by
\[
\calA_i = \begin{cases}
    s_i^{-1} \calH_i, & \text{if } s_i\neq 0, \\
    \calH_i, & \text{otherwise}.
\end{cases}
\] 
In this way, we can rewrite the equation $H\be^\intercal = \bs^\intercal$ as follows
 \begin{equation}\label{rank2_system_matrix}
    A \cdot \be^\intercal = \bv \Leftrightarrow
    \begin{pmatrix}
        a_{1,1}   & a_{1,2}    \\
        a_{2,1}   & a_{2,2}    \\
        \vdots    & \vdots     \\
        a_{n-k,1} &  a_{n-k,r} 
    \end{pmatrix}
    \begin{pmatrix}
        \varepsilon_1 \\
        \varepsilon_2 
    \end{pmatrix}
    =\begin{pmatrix}
        v_1 \\
        v_2 \\
        \vdots
        \\
        v_{n-k}
    \end{pmatrix},
\end{equation}
where $a_{i,j} \in \calA_i$ and $\bv=(v_1,\ldots, v_{n-k})^\intercal$ is a vector over $\Fqm$ such that $v_i = 0$ when $s_i=0$ and is $1$ otherwise.

\smallskip

Suppose that $v_i= v_j = 1$, from \eqref{rank2_system_matrix} we can extract the linear system
 \begin{equation}\label{rank2_reduced_system_matrix}
    A_{\{i,j\}, \{1,2\}} \cdot \be^\intercal =
    \begin{pmatrix}
        a_{i,1}   & a_{i,2}    \\
        a_{j,1}   & a_{j,2}    \\
    \end{pmatrix}
    \begin{pmatrix}
        \varepsilon_1 \\
        \varepsilon_2 
    \end{pmatrix}
    =
        \begin{pmatrix}
        1 \\
        1
    \end{pmatrix}.
\end{equation}
There are at most $q^{4\rho}$ possible configuration of the matrix $A_{\{i,j\}, \{1,2\}}$. 
If the matrix $A_{\{i,j\}, \{1,2\}}$ is invertible, we can solve \eqref{rank2_reduced_system_matrix} by Cramer's rule \cite{KramelRule} as
\begin{equation}
    \varepsilon_1 = \frac{ \det 
    \begin{pmatrix}
        1   & a_{i,2} \\
        1   & a_{j,2}
    \end{pmatrix}
    }
    {\det
    \begin{pmatrix}
        a_{i,1}   & a_{i,2} \\
        a_{j,1}   & a_{j,2}
    \end{pmatrix}
    }, \quad 
    \varepsilon_2 = \frac{ \det 
    \begin{pmatrix}
        a_{i,1}   & 1 \\
        a_{j,1}   & 1
    \end{pmatrix}
    }
    {\det
    \begin{pmatrix}
        a_{i,1}   & a_{i,2} \\
        a_{j,1}   & a_{j,2}
    \end{pmatrix}
    }.
\end{equation}
After calculation of determinants this reduces to
\begin{equation}\label{error_described}
    \varepsilon_1 = \frac{
        a_{i,2} - a_{j,2}
    }
    {
        a_{i,1} a_{j,2} - a_{i,2} a_{j,1}
    }, \quad 
    \varepsilon_2 = \frac{
        a_{i,1} - a_{j,1}
    }
    {
        a_{i,1} a_{j,2} - a_{i,2} a_{j,1}
    }.
\end{equation}
We define the set
\begin{equation}\label{Bij_rank2}
    B_{i,j} = \left\{
    \frac{
        a_{i,2} - a_{j,2}
    }
    {
        a_{i,1} a_{j,2} - a_{i,2} a_{j,1}
    }
    \,:\, 
    a_{i,*} \in \calA_i, \,a_{j,*} \in \calA_j
    \right\}
    .
\end{equation}
Recall that $\calA.\calB$ denotes the product vector space generated by the products of elements in the vector spaces $\calA$ and $\calB$.
For the set $B_{i,j}$ in \eqref{Bij_rank2}, the denominators in the division 
satisfy
$$
a_{i,1}a_{j,2} - a_{i,2}a_{j,1} \in \calA_i . \calA_j
$$ 
since $a_{i,*}a_{j,*}\in \calA_i . \calA_j$. This leads to an observation that
$$
B_{i,j} \subseteq \frac{\calA_i + \calA_j}{\calA_i . \calA_j}.
$$ 
In addition, due to the symmetry of the equations in \eqref{error_described}, we have 
$$
\calE \subsetneq B_{i,j}\subseteq \frac{\calA_i + \calA_j}{\calA_i . \calA_j}.
$$ Since this is true for any $B_{i,j}$ we can intersect different $B_{i,j}$ in order to get a smaller set where to look for the error support $\calE$.
Since the sets $B_{i,j}$ depend on $4$ parameters in $\Fq$-vector spaces of dimension $\rho$, their size is upper bounded by $q^{4\rho}$. 

Following the method just explained, we run some MAGMA simulations for the case $r=2$, $\rho=2$, $n=m=20$ and $q=2,3$. Using $4$ rows in the parity check matrix it was possible to recover the error support in $707/1000$ cases for $q=2$ and in $954/1000$ cases for $q=3$. Increasing the number of parity check rows to $6$, the results improved to $988/1000$ and $998/1000$, respectively.
In all of the above mentioned cases, it was possible to recover the error support after the intersection of just two $B_{i,j}$ sets.

In Section \ref{SecV} we will explain why and how increasing the number of rows of the parity check matrix can help us achieve higher success rates in decoding.

\subsection{Cramer's rule for the general case}
The decoding process with Cramer's rule for $r=2$ can be easily extended to general cases.

\smallskip

Let $\calE = \langle \varepsilon_1, \ldots, \varepsilon_r \rangle$ be the support of the error $\be=(e_1,\ldots, e_n)$. Given $H$ a parity check matrix of row weight $\rho$ we can rewrite the $i$-th syndrome equation $\be\cdot \bh_i^\intercal = s_i$ as
\begin{equation}\label{syndorme_eq}
    \varepsilon_1 h_{i,1} + \cdots + \varepsilon_r h_{i,r} = s_i \qquad h_{i,*} \in \calH_i,  
\end{equation}
where $\calH_i = \langle\bh_i\rangle$ is the $\rho$-dimensional vector space generated by the $i$-th row of $H$.

As we did in the previous section, we can manipulate each equations multiplying on both sides by $s_i^{-1}$ when $s_i \ne 0$. We rename the variables and the $\Fq$-vector subspace with the same notation.
In this way we get a system that can be described by the matrix equation $A \cdot \be^\intercal =
    \bv $, more specifically,
 \begin{equation}\label{general_system_matrix}
    \begin{pmatrix}
        a_{1,1}   & \cdots & a_{1,r}    \\
        a_{2,1}   & \cdots & a_{2,r}    \\
        \vdots    &  \ddots      & \vdots     \\
        a_{n-k,1} & \cdots & a_{n-k,r} 
    \end{pmatrix}
    \begin{pmatrix}
        \varepsilon_1 \\
        \vdots \\
        \varepsilon_r 
    \end{pmatrix}
    =\begin{pmatrix}
        v_1 \\
        v_2 \\
        \vdots
        \\
        v_{n-k}
    \end{pmatrix}.
\end{equation}
This time we can select $r$ rows of \eqref{general_system_matrix} and solve by the Cramer's rule.
Let $I = \{i_1, \ldots, i_r\} \subseteq \{1, \ldots, n-k \}$ a set of $r$ indices. The Cramer's rule for the sub-system given by the the rows indexed by $I$ is 
\begin{equation}\label{general_cramer}
 \varepsilon_j = \frac{\det( A_{I, \{1, \ldots, j-1\}} | \bv | A_{I, \{j+1, \ldots, r\}})}{\det(A_{I, \{1, \ldots, r\}})},
\end{equation} 
 where the matrix $A_{I, \{1, \ldots j-1\}} | \bv | A_{I, \{j+1, \ldots, r\}}$ is the matrix $A_{I, \{1, \ldots, r\}}$ where the $j$-th column is substituted by $\bv$.
In this way we can define the set $B_{I} \subseteq \Fqm$ of size upper bounded by $q^{r^2\rho}$ as:
\begin{equation}\label{general_B_cramer}
 B_I = \left\{ \frac{\det( A_{I, \{1, \ldots, r-1\}} | \bv )}{\det(A_{I, \{1, \ldots, r\}})} \bigg|\quad a_{i,j} \in \calA_i \right\}
\end{equation}
where $a_{i,j}$ indicates the entries of $A_{I, \{1, \ldots, r\}}$ at the row $i$ and column $j$. Notice that since the variables $a_{i,j}$ are uniformly chosen from all the possible values of the $\Fq$-vector space $\calA_i$, the quantity $\det(A^{'}_{I, \{1, \ldots j-1\}} | \bv | A^{'}_{I, \{j+1, \ldots, r\}})$ with  $j=1, \ldots, r$ can be obtained as well as $\det( A_{I, \{1, \ldots, r-1\}} | \bv )$, where $A$ is obtained from $A'$ by renaming the columns after $j$ and multiplying one column by $-1$ when $\Fq$ has odd characteristic. 

Each $B_I$ contains the error support $\calE$. We can then intersect some different $B_I$'s to obtain a smaller set that can be exactly $\calE$ or at least a smaller set where we know $\calE$ is contained. The situation where we expect this algorithm to work is when the cardinality of $B_I$ is much smaller than $\Fqm$ (i.e. $r^2\rho \ll m$).

If we intersect two sets $B_I$ and $B_J$ such that $I \cap J \ne \emptyset$, we can expect to have a bigger intersection.
Suppose without loss of generality that the index $1$ belongs to both $I$ and $J$. Among all the possible configurations of $A_{I, \{1, \ldots, r\}}$ and $A_{J, \{1, \ldots, r\}}$ we will have one where the first row of both matrices is $(0, \ldots, 0, a_{1,r})$. When we apply the Cramer's rule to that matrix we will obtain $a_{1,r}^{-1}$, then we know without the need of computing that $\calA_1^{-1} \subset B_I \cap B_J$.

For this reason, in order to do as few intersections as possible, it is better to use disjointed indexes. If we want to have at least two disjointed sets of indexes, we need $n-k \ge 2r$. Suppose we managed to recover the error support, to recover each component of the error $\be$ we need to solve an expanded linear system of $nr$ unknowns and $m(n-k)$ equations. Therefore we need that $(n-k) \ge \lceil \frac{nr}{m} \rceil$. Then we have the condition on the dimension $k$
\[
k \le \min \left(n-2r, n- \left \lceil \frac{nr}{m} \right \rceil \right).
\]
The hardest part of this algorithm is ruled by the error support recovery. In particular it is given by the cost of computing the sets $B_I$, to obtain each element of $B_I$ we need to solve a linear system in $r$ unknowns. We can then give an estimation of the total complexity in the order of $O(r^3 q^{r^2 \rho})$.

\section{Decoding failures}\label{SecV}
We implement the algorithm described in the previous section on MAGMA on an easy instance of the problem. In particular we created different instances of row-LRPC codes of row weight $2$ and we did correct errors of rank weight $2$.
During the error support recovery, we noticed two possible causes of failure.

\subsection{Syndrome contains some zeroes}
Let $\calE = \langle \varepsilon_1, \ldots, \varepsilon_r \rangle$ be the error support. The generic equation of the system is
\begin{equation}\label{general_equation}
a_{i,1} \varepsilon_1 + \cdots + a_{i,r} \varepsilon_r = s_i.
\end{equation}
 
We can consider $a_{i,*}$ as independent variables uniformly distributed over $\calH_i$, then $a_{i,j} \varepsilon_j$ are independent uniformly distributed over $\calH_i \varepsilon_j$. Therefore $s_i$ is a variable uniformly distributed over the product space $\calH_i . \calE$ which, with an high probability, has dimension $r \rho$ when $r \rho \ll m$ \cite{argon2018}.\\
By our hypothesis 
$\calH_i = \langle \alpha_1, \ldots, 
\alpha_\rho \rangle$ then each $a_{i,j}$ can be rewritten as 
$\sum_{i \in \{1, \ldots, \rho \}} \lambda_{i,j} \alpha_i$ for some 
$\lambda_{i,j} \in \Fq$. The equation \eqref{general_equation} can be rewritten as
\[
\sum_{j \in \{1,\ldots, r\}} \sum_{l \in \{1, \ldots, \rho\}} \lambda_{l,j} \alpha_{l} \varepsilon_j = s_i \quad \lambda_{l,j} \in \Fq.
\]
Suppose now that $s_i=0$ and $r \rho \ll m$, with an high probability, $\alpha_l \varepsilon_j$ are all $\Fq$-linearly independent which implies $\lambda_{i,j} = 0 \quad \forall i \in \{1, \ldots, \rho\}, \quad  \forall j \in \{1, \ldots, r\}$, therefore $a_{i,1} = \cdots = a_{i,\rho} = 0$.\\
Unfortunately in the case $s_i = 0$ we cannot get any information about the error support. Indeed, if all the coefficients $a_{i,*}$ are zero, the equation \eqref{general_equation} is trivially satisfied by all the possible sets of $r$ elements in $\Fqm$.

If we consider $s_i$ as independent uniformly distributed variables over the $\Fq$-vector subspace  $\calH_i . \calE$ of dimension $r \rho$, we have the probability  $P(s_i = 0) = q^{-r \rho}$. We expect to have at least $t$ non-zero components of the syndrome with probability 
$\sum_{i=t}^{n-k} \binom{n-k}{i}(1-q^{-r\rho})^i q^{-r\rho(n-k-i)}$.

The equations leading to zero will not be totally discarded as they can still be used, after the recovery of the error support, to help us recovering the actual error component by component.

\subsection{Intersecting subspaces}
For simplicity here we will analyze the case when the error support has rank $2$.
In the error support recovery, after we discard all the equation corresponding to $s_i = 0$, there is still a possible issue we will show in this paragraph.

Let $\calE = \langle \varepsilon_1,\varepsilon_2 \rangle$
be the error support, consider the generic system:
\begin{equation}\label{2_rows_intersect}
    \begin{cases} 
        a_{i,1} \varepsilon_1 + a_{i,2} \varepsilon_r = 1 \\
        a_{j,1} \varepsilon_1 + a_{j,2} \varepsilon_r = 1 
    \end{cases}
\end{equation}
where $a_{i,*} \in \calA_i = \calH_i s_i^{-1}$ and $a_{j,*} \in \calA_j = \calH_j s_j^{-1}$.

Suppose that $\calA_i$ and $\calA_j$ intersect non-trivially, the associated set $B_{i,j}$ will be smaller than what we usually expect. A smaller set is not necessarily a good news. At the contrary, what we observed with MAGMA, is that many times this set does contain only part of the error support. When we intersect such a set with the other sets containing the whole support, we recover only part of the support instead of the whole as we wish.

Given two randomly generated $\Fq$-vector subspaces of dimension $\rho \ll m$, the probability of having a non-trivial intersection between the two should be negligible. What we observe with MAGMA, suggests that in our case this happens with a much higher frequency than one could naively expect.

Consider better the system \eqref{2_rows_intersect}, without loss of generality assume $i=1$ and $j=2$
\[
\begin{cases} 
    a_{1,1} \varepsilon_1 + a_{1,2} \varepsilon_2 = 1 \\
    a_{2,1} \varepsilon_1 + a_{2,2} \varepsilon_2 = 1 
\end{cases}
\]
Consider the case $a_{1,2} = a_{2,2} = 0$, this would imply $a_{1,1} \varepsilon_1 = a_{2,1} \varepsilon_1= 1$, then $\varepsilon_1^{-1} \in \calA_1 \cap \calA_2$. A similar conclusion can be made from $a_{1,1} = a_{2,1} = 0$. A bit less straight forward is the case $a_{1,1} = a_{1,2} \ne 0$. In this case we have $(\varepsilon_1 + \varepsilon_2)^{-1} \in \calA_1$, if we also have $a_{2,1} = a_{2,2} \ne 0$ then $(\varepsilon_1 + \varepsilon_2)^{-1} \in \calA_1 \cap \calA_2$.
The coefficients $a_{i,*}$ all belong to $\calA_i$ which, by our hypothesis, is a small set of size $q^\rho$. This explains why we observe non trivial intersections so often.

In the general case, after we consider a system of $r$ rows, we will obtain an $r \times r$ matrix $A$ whose entries belong to some $\Fq$-vector subspace of dimension $\rho$, that multiplies the vector $\be = (\varepsilon_1, \ldots, \varepsilon_r)$ containing a base of the error support. Intuitively, in the case the $j$-th column of $A$ is all $0$s,  we have no hope to get some information about $\varepsilon_j$. The likelihood of this event decreases exponentially in $r$. In the case $r > 2$, we cannot find a relation between the $\Fq$-vector subspaces $\calA_i$ that tells us something about the matrix $A$.  
Increasing the number of equation will increase the probability of success.
In the case of intersecting spaces, it will give us more $\Fq$-vector subspaces among which we can find more disjointed couples. In the case of too many zeroes in the syndrome, increasing the number of equations will increase the probability of having enough non-zero components.

\section{Conclusion}

This paper introduces a new family of rank metric codes, which can be seen as an extended variant of LRPC codes introduced in \cite{gaborit2013}. We showed that the proposed row-LRPC codes contain LRPC codes as a strictly proper subfamily when row weight is no less than $2$ under certain condition, which confirms the validity of the extension. The decoding of row-LRPC codes is also studied by intersecting certain sets obtained from the syndrome equation, which has complexity upper bounded by $O(r^3q^{r^2\rho})$, where $\rho$ is the row weight of the code in question and $r$ is the rank weight of the error vector.

Due to high complexity of the current decoding approach, we will continue exploring the possibility of a polynomial-time decoding algorithm for the row-LRPC codes, which by then would have interesting applications in cryptography..

\bibliographystyle{IEEEtran}

\bibliography{bibliography.bib}
\end{document}